\documentclass[sigconf,9pt]{acmart}

\renewcommand\footnotetextcopyrightpermission[1]{} 

\copyrightyear{2026}
\acmYear{2026}
\setcopyright{acmlicensed}
\acmConference[DAC '26]{63st ACM/IEEE Design
Automation Conference}{July 26-29, 2026}{Long Beach, CA, USA}

\usepackage[ruled,linesnumbered]{algorithm2e}
\usepackage{multirow}
\usepackage{graphicx}
\usepackage{xcolor}
\usepackage{comment}
\usepackage{subfigure}
\usepackage[labelformat=simple]{subcaption}
\usepackage{booktabs}
\usepackage{adjustbox}

\graphicspath{{Figures/}}

\newcommand{\eg}{{\it e.g.}\xspace}
\newcommand{\ie}{{\it i.e.}\xspace}

\begin{document}

\title[Exploiting Dependency and Parallelism: Real-Time Scheduling and Analysis for GPU Tasks]{Exploiting Dependency and Parallelism:\\Real-Time Scheduling and Analysis for GPU Tasks}

\author{Yuanhai Zhang}
\affiliation{
  \institution{Sun Yat-sen University}
  \city{Guang Zhou}
  \country{China}
}
\author{Songyang He}
\affiliation{
  \institution{Sun Yat-sen University}
  \city{Guang Zhou}
  \country{China}
}

\author{Ruizhe Gou}
\affiliation{
  \institution{Hunan University}
  \city{Chang Sha}
  \country{China}
}

\author{Mingyue Cui}
\affiliation{
  \institution{Sun Yat-sen University}
  \city{Guangzhou}
  \country{China}
  }
  
\author{Boyang Li}
\affiliation{
  \institution{Sun Yat-sen University}
  \city{Guangzhou}
  \country{China}
  }
  
\author{Shuai Zhao}
\authornote{Corresponding author. Email: zhaosh56@mail.sysu.edu.cn}
\affiliation{
  \institution{Sun Yat-sen University}
  \city{Guangzhou}
  \country{China}
}

\author{Kai Huang}
\affiliation{
  \institution{Sun Yat-sen University}
  \city{Guangzhou}
  \country{China}
}

\begin{abstract}
With the rapid advancement of Artificial Intelligence, the Graphics Processing Unit (GPU) has become increasingly essential across a growing number of safety-critical application domains.
Applying a GPU is indispensable for parallel computing; however, the complex data dependencies and resource contention across kernels within a GPU task may unpredictably delay its execution time.
To address these problems, this paper presents a scheduling and analysis method for Directed Acyclic Graph (DAG)-structured GPU tasks.
Given a DAG representation, the proposed scheduling scales the kernel-level parallelism and establishes inter-kernel dependencies to provide a reduced and predictable DAG response time.
The corresponding timing analysis yields a safe yet non-pessimistic makespan bound without any assumption on kernel priorities.
The proposed method is implemented using the standard CUDA API, requiring no additional software or hardware support. Experimental results under synthetic and real-world benchmarks
demonstrate that the proposed approach effectively reduces the worst-case makespan and measured task execution time compared to the existing methods up to $32.8\%$ and $21.3\%$, respectively.
\end{abstract}
\maketitle

\section{Introduction}

The rapid advancement of Artificial Intelligence (AI) has made Graphics Processing Units (GPUs) indispensable in many safety-critical domains such as autonomous driving, avionics, and industrial control, where both computational efficiency and timing predictability are crucial.
GPU platforms provide massive parallel execution capabilities and are widely adopted in such systems to accelerate computation.
A GPU task typically consists of multiple computation stages, \ie, kernels in the CUDA framework, which are often represented as a Directed Acyclic Graph (DAG) by modern AI compilers~\cite{li2020deep}.
Although CUDA supports kernel-level parallel execution, the complex data dependencies and heterogeneous workload distributions within GPU tasks often delay the response time, particularly for tasks composed of numerous lightweight kernels with small computation loads~\cite{ma2020rammer}.
Moreover, the black-box nature of GPU hardware execution further increases the pessimism of timing analysis and limits the flexibility of scheduling.

Improving kernel-level concurrency is an effective way to reduce the overall response time of a GPU task, \ie, its makespan~\cite{lin2023kesco}.
Several studies have enhanced concurrency using standard CUDA APIs such as streams and events~\cite{lin2023kesco,parravicini2021dag}, while others have proposed resource allocators to accelerate kernel execution~\cite{STGM,zhang2025improving}. 
However, these approaches do not consider the various computation load and resource requirements across kernels, resulting in unpredictable prolonged kernel execution time~\cite{Zeng2023Enabling}.
Existing response time analysis methods for GPU tasks are also limited in scope.
Most analysis approaches focus on the GPU task with single or sequential kernels ~\cite{STGM,zou2023rtgpu, Ni2025hard,xu2025real}, while others fail to consider the dependency delay between kernels~\cite{yang2017response}.
To the best of our knowledge, the timing analysis for DAG-structured GPU tasks remains unseen.

Although CUDA streams enforce a deterministic kernel submission order, the GPU’s hardware scheduler introduces nondeterminism in the actual execution order of concurrently launched kernels. Moreover, kernel-level preemption through prioritized CUDA streams is unreliable~\cite{Bakita2024Demystifying}, meaning that the DAG scheduling and analysis approaches relying on node-level priorities cannot be applied~\cite{he2019,zhao2020,zhao_tpds}.
Some studies instead schedule GPU tasks via priority mechanisms at the CPU process level~\cite{DNN_real,han2025real}; however, such approaches do not suit tasks that contain multiple kernels with complex dependencies.
Therefore, developing a predictable scheduling and timing analysis framework for GPU tasks that fully exploits kernel dependency and parallelisms remains an open problem.

\textbf{Contribution. }
To address the aforementioned problems, this paper proposes a scheduling and timing analysis framework for DAG-structured GPU tasks, providing a reduced and predictable makespan.
Given a DAG representation of a GPU task, the proposed method decomposes the task into a sequence of balanced groups, each containing kernels that can execute concurrently.
We propose a \textit{parallelism scaling} mechanism for kernels within a group by adjusting the computing resource requirement according to each node’s computation load, achieving balanced execution times and avoiding resource contention.
To further reduce the makespan, parallel nodes of a balanced group are opportunistically launched.
For large kernels that exceed the spare GPU capacity, a \textit{node segmentation} mechanism is constructed to divide them into smaller sequential segments.
The sequential execution order of each balanced group is ensured by the constructed \textit{extra dependency} on the DAG, resulting in a predictable overall makespan.
The proposed method can be entirely implemented within the scope of the standard CUDA API, without requiring any additional hardware or software support.
A large-scale synthesized experiment and a case study demonstrate the effectiveness of the proposed scheduling and timing analysis framework.
In summary, the contributions of this paper are as follows:
\begin{itemize}
\item A scheduling scheme integrating \textit{parallelism scaling}, \textit{node segmentation} , and \textit{extra dependency} mechanism for DAG-structured GPU tasks to reduce the makespan.
\item A timing analysis of the worst-case makespan of the GPU tasks without the assumption of node priorities.
\item Experimental results show that the proposed method outperforms the existing methods by up to $32.8\%$ in worst-case makespan and $21.3\%$ in measured task execution time.
\end{itemize}

\section{System Model}

This section introduces the system model, which describes a GPU platform executing a task in a DAG-structured intermediate representation (IR). 
Figure~\ref{fig:dag_gpu} illustrates a typical GPU task, which begins with data transfer from the CPU to the GPU (\textit{memHtoD}), 
proceeds through a computation phase modeled as a DAG, and concludes with data transfer back to the CPU (\textit{memDtoH}). 
The GPU platform contains $M$ homogeneous Streaming Multiprocessors (SMs), \eg $M=80$ for NVIDIA V100.
Although NVIDIA GPUs are assumed, the proposed method applies to any platform following the Single Instruction Multiple Thread (SIMT) execution model.

\subsection{GPU Task Model}

The proposed method focuses on a single periodic DAG-structured GPU task. The computation stage on the GPU is periodically released by a CPU process in the system.
A periodic DAG task is defined as $\tau = \{G, T, D\}$, where $G=(V, E)$ is a directed acyclic graph, $T$ is the period, and $D=T$ is the implicit deadline.  
Each node $v_i \in V$ represents a kernel, and each edge $e_{i,j}=(v_i,v_j)\in E$ captures a data dependency (e.g., write-after-read), 
ensuring that the successor $v_j$ starts only after the predecessor $v_i$ completes~\cite{parravicini2021dag}.  
A node is said to be \textit{released} when all its predecessors have finished.  
Nodes in a set $S$ without predecessors or successors form the \emph{source} and \emph{sink} node sets, denoted by $src(S)$ and $sink(S)$, respectively. Without loss of generosity, we assume a single source and a single sink node for $G$. 
If $v_i$ is a transitive predecessor of $v_j$, then $v_i$ is an ancestor (\ie, $v_i \in ance(v_j)$).  
Nodes that share no transitive dependencies with $v_i$ form its concurrent set, denoted as $con(v_i)$.  
A path $\lambda = \langle v_1, v_2, \ldots, v_k\rangle$ is defined as a sequence of nodes satisfying $(v_i,v_{i+1}) \in E$ for all $v_i \in \lambda \setminus v_k$. The length of a path is the sum of the computation loads of its nodes.

\subsection{Execution Model}
In the SIMT execution model, parallel execution for each kernel is configured through two launch parameters, $D_g$ and $D_b$, 
which represent the thread grid and block dimensions, respectively. 

\textbf{Kernel execution:}
In the DAG model, each node $v_i = \{\hat{C}_i, C_i, m_i\}$ represents a kernel execute on GPU, 
where $\hat{C}_i$ denotes the computation load, $C_i$ is the kernel execution time, and $m_i$ represents the number of SMs required for execution. 
The parallelism $m_i$ can be adjusted by varying $D_g$ and $D_b$ in the kernel launch configuration. 
Each SM has a finite resource capacity, limited by factors such as register usage and shared memory. 
When the number of active threads exceeds the capacity of a single SM, more SMs are required to accommodate the workload.
For example, if a kernel is launched with $D_g = 1$
and $D_b = 256$, fully occupying one SM. 
Increasing $D_g$ in this case scales up $m_i$, thereby increasing the degree of parallelism.

The execution time of a kernel follows Gustafson's law, 
where the time per thread decreases as the number of threads increases~\cite{zou2023rtgpu}. 
The computation load $\hat{C}_i$ is defined as the maximum kernel execution time when $v_i$ occupies one SM. 
The execution time under parallelism $m_i$ is modeled as
\begin{equation}\label{eq:kernel_execution}
    C_i = \max \left(t_{\min}, \left\lceil \frac{m_i}{M} \right\rceil \frac{\hat{C}_i}{m_i}\right),
\end{equation}
where $t_{\min}$ denotes the theoretical lower bound constrained by memory throughput, 
according to the Roofline performance model~\cite{williams2009roofline}. 
The maximum useful parallelism for accelerating $v_i$ is $m_i^{\max} = \hat{C}_i / t_{\min}$. 
For simplicity, we set $t_{\min} = 1$ as the basic time unit, 
which does not affect the generality of the proposed method.
Note that increasing $m_i$ does not always minimize $C_i$. 
If some SMs are already occupied, the node cannot achieve the desired parallelism and may incur unpredictable delay.

\begin{figure}[t]
    \centering
    \includegraphics[width=1\linewidth]{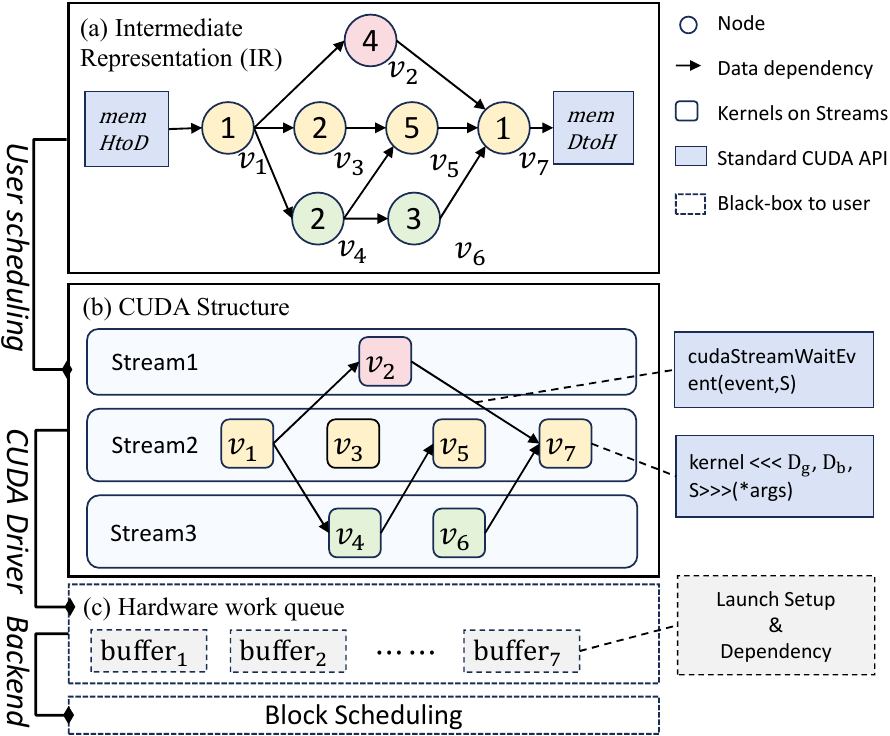}
    \caption{The execution of a GPU task in DAG representation}
    \label{fig:dag_gpu}
\end{figure}

\textbf{Kernel concurrency: }
The standard CUDA API provides kernel concurrency and dependency control through streams and events. Each kernel in the DAG is launched on a stream. Kernels within the same stream execute sequentially, while kernels on different streams may run concurrently if resources allow. Inter-stream dependencies are enforced using CUDA events. 
Kernel dependencies can also be controlled using CUDA Graphs API, which additionally reduces kernel launch overhead, an important advantage for tasks composed of many small kernels~\cite{NVIDIA_GettingStartedCudaGraphs_2020}.
When a GPU task is launched, the CUDA driver generates command buffers encoding kernel configurations and dependencies, and places them into the hardware work queue. The block scheduler dispatches kernels whose dependencies are satisfied to available SMs, while others remain pending~\cite{otterness2017inferring}. Due to the black-box nature of the hardware scheduler, the execution order of concurrent kernels and the resulting contention delays are unpredictable to users.

\textbf{Example:}
Figure~\ref{fig:dag_gpu} illustrates the execution of a DAG-structured GPU task with $7$ nodes, 
$V = \{v_1, \ldots, v_7\}$, where $v_1$ and $v_7$ are the source and sink nodes of $G$, respectively. 
The number inside each node indicates its computation load, \eg, $\hat{C}_3 = 3$ time units. The concurrent node set of $v_2$ is $con(v_2) = \{v_3,v_4\}$. 
The nodes in the same color are launched on the same streams as:  $S_1 = \{v_2\}$, $S_2 = \{v_1, v_3, v_5, v_7\}$, and $S_3 = \{v_4, v_6\}$. 
Inter-kernel dependencies like $\{v_1,v_4\}$ and $\{v_2,v_7\}$ are implemented via CUDA events. 
These kernels are placed into the hardware work queue as command buffers and dispatched to SMs by the block scheduler. 
The actual execution order between concurrent nodes such as $v_2$ and $v_3$, however, remains unpredictable.

\textbf{Problem formulation:}
Given a DAG-structured GPU task, the objective is to minimize the task execution time while guaranteeing a predictable DAG makespan. 
This is achieved by appropriately scaling each node’s parallelism $m_i$ and adjusting the DAG structure to balance concurrent execution under dependency constraints.

\section{Sub-graph Division}\label{sec:division}

To minimize the delay caused by complex data dependencies in the DAG, concurrent nodes without dependency constraints are extracted.
This section presents a sub-graph division method that decomposes the DAG into a sequence of disjoint balanced groups, forming the basis for the proposed scheduling and timing analysis.
First, the nodes that may experience dependency delay are identified, and their ancestors are organized into disjoint blocks.
Then, within each block, concurrent nodes are further arranged into a sequence of balanced groups, the basic sub-graph for scheduling and analysis.

\subsection{Block Formulation}

A node with multiple predecessors is defined as a \textit{join node}, and it cannot start execution until all predecessors finish. 
A join node may experience a dependency delay, which source from the unbalanced finish time of its predecessors.
Join nodes are first identified and sorted according to their \textit{Cumulative ancestor workload}. The notation $W^{anc}(v_i)$ de notes the cumulative computation load cost when $v_i$ finished as:
\begin{equation}
    W^{anc}(v_i) = \sum_{\forall v_j \in ance(v_i) \cup \{v_i\}} \hat{C}_j,  \quad \forall v_i \in G.
\end{equation}

The ordered list $\Theta = \langle \theta_1, \ldots, \theta_N \rangle$ contains all the $N$ join nodes sorted by descending cumulative ancestor workload, which also preserves the DAG’s topological order.

Next, the DAG is partitioned into a sequence of disjoint node blocks $B =[B_1, B_2, \ldots, B_{N+1}]$ (not confused with the thread block in CUDA), where each block $B_k$ includes all ancestors of the join node $\theta_k$ that are not assigned to previous blocks:
\begin{equation}\label{eq:b_k}
B_k = ance(\theta_k) \setminus \bigcup_{j=1}^{k-1} B_j ,\quad \forall \theta_k \in \Theta
\end{equation}
After all the $N$ blocks are formed, the remaining nodes are collected into a residual block $B_{N+1}$. Notice that the blocks in $B$ also follow the order of $\Theta$.

Since all the $\theta_k$ do not belong to their ancestor block $B_k$, all the nodes in $B_k$ contain at most one predecessor in the same block. 
Any $v_i \in B_k$ is a \textit{local source} if all its predecessors lie outside $B_k$, \ie $v_i \in src(B_k)$ , and a \textit{local sink} if all successors lie outside $B_k$, \ie $v_i \in sink(B_k)$. 
A \textit{local complete path} $\lambda_k^j$ connects a local source and a local sink:
\begin{equation}\label{eq:local_complete_path}
\lambda_k^j = \langle v_s, \ldots, v_e \rangle, \quad
v_s \in source(B_k),\ v_e \in sink(B_k).
\end{equation}
Each node in $B_k$ belongs to at least one local complete path, and each path ends with a predecessor of the join node $\theta_k$. 
The notation $\Lambda_k$ denotes the set of local complete paths of $B_k$.

\subsection{Balanced Groups Construction}
To minimize the release time of $\theta_k$, the proposed scheduling balances the finish time of each local complete path in $B_k$.
The concurrent nodes in different local complete paths that can occupy all the $M$ SMs are extracted and organized into a \textit{balanced group}. 
Following the topology order, each group can safely run sequentially and form the ordered balanced group list $\Pi$.
The formulation of a balanced group $\pi_j \in \Pi$ satisfies the following rules:

\textbf{Rule 1:} $|\pi_j| \le M,  \forall \pi_j \in \Pi.$ 
Each kernel requires at least one SM; therefore, the number of concurrent nodes cannot exceed $M$. Otherwise, the resource contention among $\pi_j$ can not be avoided.

\textbf{Rule 2:} If $|\pi_j|>1$, then $m_i^{max}<M,  \forall v_i \in \pi_j.$  
For a large kernel with enough computation load to fully occupy all the SMs ($m_i^{max} \ge M$), the kernel-level parallelism is unnecessary.

\textbf{Algorithm~\ref{alg:garpah_division}} outlines the process of complete sub-graph division in Section \ref{sec:division}, which takes $G$ as the input and outputs the ordered balanced group list $\Pi$. First, the join nodes are detected and ordered by \textit{Cumulative ancestor workload} (Lines 1). Then the blocks are constructed (Lines 2-6). For each block, after the local complete path set is identified (Lines 9), balanced groups are formed by iteratively extracting the head nodes of each local complete path until all nodes are grouped (Lines 10–17), following \textbf{Rule 1} (Line 11) and \textbf{Rule 2} (Lines 12-14).  
The time complexity of Algorithm~\ref {alg:garpah_division} is $\mathcal{O}(|V|)$.

\begin{algorithm}[h]
\caption{\textit{Sub-graph\_Division}($G$)}
\label{alg:garpah_division}
\KwIn{DAG $G = (V,E)$}
\KwOut{Balanced group set $\Pi$}

$\Theta \gets$ join nodes sorted by $W^{anc}$\;
Initialize $B \gets [\,]$\;
\For{$\theta_k \in \Theta$}{
    $B_k \gets$ Eq.~\eqref{eq:b_k}; Append $B_k$ to $B$\; 
}
Append residual block to $B$\; 
Initialize $\Pi \gets [\,]$\;
\For{each $B_k \in B$}{
    Identify $\Lambda_k$ via Eq.~\eqref{eq:local_complete_path}\;
    \While{$\Lambda_k \neq \emptyset$}{
        $\pi \gets Top_M\{\text{head}(\lambda) \mid \lambda \in \Lambda_k\}$ by $W^{anc}$\;
        \If{$m_i^{max} > M , \forall v_i \in \pi$}{$\pi \gets$ node with max $m_i^{max}$ in $\pi$\;}
        Remove nodes in $\pi$ from $\Lambda_k$\;
        Append $\pi$ to $\Pi$\;
    }
}
\vspace{-0.1cm}

\end{algorithm}

\textbf{Example: }The DAG in Figure \ref{fig:dag_gpu} contains $2$ join nodes: $\theta_1 = v_5$ and $\theta_2 = v_7$. The DAG is decomposed into $3$ disjoint blocks: $B_1 = \{v_1,v_3,v_4\}$, $B_2 = \{v_2,v_5,v_6\}$, and $B_3 = \{v_7\}$.
The corresponding local complete path sets are $\Lambda_1 = \{\langle v_1,v_3 \rangle ,\langle v_1,v_4 \rangle \}$ and $\Lambda_2 = \{ \langle v_2 \rangle , \langle v_5 \rangle, \langle v_6 \rangle \}$ . The balanced group list of this DAG is $\Pi = \langle\{v_1\},\{v_3,v_4\},\{v_2,v_5,v_6\},\{v_7\}\rangle$.

\section{Scheduling and Analysis}\label{sec:sche_ana}

This section presents a scheduling and timing analysis framework for DAG-structured GPU tasks on the basis of the proposed sub-graph division. The scheduling method determines the execution order and the required parallelism of each node. Building on the schedule scheme, the timing analysis derives the overall DAG makespan.

\subsection{Scheduling}\label{sec:schedule}

The proposed scheduling framework operates by scaling the required parallelism of each node and restructuring the DAG through node segmentation and extra dependency constraints. Importantly, it does not rely on GPU priority mechanisms and can be fully implemented using the standard CUDA API. The parallelism scaling scheme equalizes the execution times of concurrently executed nodes according to their computation loads, thereby minimizing the completion time of each balanced group. The node segmentation mechanism and the extra dependency further improve hardware utilization while ensuring a predictable overall DAG makespan.

\textbf{Parallelism scaling: }
For lightweight kernels within a balanced group, the scheduler aims to equalize their execution times.
For any balanced group $\pi_j \in \Pi$, the parallelism $m_i$ of each node $v_i \in \pi_j$ is determined in proportion to its computation load:
\begin{equation}\label{eq:m_i}
m_i = \min \left( m_i^{max}, \text{round}\left( \frac{\hat{C_i}}{W(\pi_j)} \cdot M \right) \right),
\quad \forall v_i \in \pi_j
\end{equation}
where $W(\pi_j) = \sum_{v_i \in \pi_j}\hat{C_i}$ is the total computation load of the group.
This configuration ensures that small kernels share GPU resources proportionally, while yielding balanced finishing times.
This parallelism setup ensures the computing resource requirement of a group would not exceed the GPU capacity (\ie $\sum m_i \leq M, \forall v_i \in \pi_j$), which guarantees each node can achieve its required number of SM, thereby achieving the execution time in Eq. \eqref{eq:m_i}.

\textbf{Launching parallel nodes: }
If a balanced group $\pi_j$ does not fully occupy the GPU, parallel nodes may be launched on the spare SMs. The GPU spare capacity when $\pi_j$ executing is:
\begin{equation}\label{eq:Ws}
    WS_j = \max(C_i) \times S_j, \quad \forall v_i \in \pi_j
\end{equation}
where $S_j = M - \sum m_i$ denote the number of remaining SMs.
The set of parallel nodes of a group $\pi_j$ is defined as:
\begin{equation}\label{eq:para_set}
para(\pi_j) =src\left(\bigcup_{v_i \in \pi_j} con(v_i) \setminus \pi_j \right) \cap \text{released\_node}
\end{equation}
where $src(\cdot)$ returns the source nodes of the given set. 
Only nodes that are released at the moment when $\pi_j$ is released are put into the parallel set, and whether a node is released is known during the process of scheduling. After the parallel set is formed, nodes in $para(\pi_j)$ are kept chosen to launch in descending order of cumulative ancestor workload until the spare capacity is exhausted or no parallel nodes remain.
For each chosen node $v_c$, its parallelism is assigned greedily as $m_c = \min(m_c^{max}, S_j)$ to fully utilize the spare SM.

\textbf{Node segmentation:} 
In the SIMT execution model, a node can be divided into multiple segments, each with its own computation load and parallelism configuration, but executing the same function as the original node.
During the execution of $\pi_j$, if a selected node $v_c \in para(\pi_j)$ has a computation load exceeding the available capacity (\ie, $\hat{C}_c > WS_j$), it is split into two segments $v_{c,p}$ and $v_{c,r}$ with
$\hat{C}_{c,p} = WS_j$ and $\hat{C}_{c,r} = \hat{C}_c - \hat{C}_{c,p}$.
The parallel segment $v_{c,p}$ is scaled as the chosen node with $m_{c,p} = \min (m_{c,p}^{max}, S_j)$ and finishes together with $\pi_j$, while the residual segment $v_{c,r}$ inherits the forward dependencies of $v_c$ and execute after $\pi_j$ finished.
Since at most one segmentation occurs per group, the total number of segmentations in a DAG is bounded by the number of balanced groups $|\Pi|$, which is smaller than $|V|$.

\textbf{Extra dependency:}
To preserve the sequential execution order of each balanced group, extra dependencies are introduced.
Let $v_R = \arg\max_{v_i \in \pi_j} C_i$\; denote the node in $\pi_j$ with the maximum execution time.
For each unlaunched node $v_n \in para(\pi_j)$ (including unexecuted segments), an edge $(v_R, v_n)$ is added, forcing these nodes to start only after the current group finishes.
For each launched node or segment $v_n \in para(\pi_j)$, extra dependencies $(v_n, succ(v_R))$ are added to ensure they finish no later than the current group.

\begin{algorithm}[h]
\caption{\textit{Scheduling}($G, \Pi$)}
\label{alg:scheduling}
\KwIn{DAG $G=(V,E)$, balanced group list $\Pi$}
\KwOut{Parallelism $m_i$, segmentation $\overline{V}$, extra deps $\overline{E}$}

Initialize $\overline{V} \gets \emptyset$, $\overline{E} \gets \emptyset$\;
Released\_nodes $\gets \{src(G)\}$ \;
\ForEach{$\pi_j \in \Pi$}{
    $m_i \gets$ Eq.~\eqref{eq:m_i}, \quad $\forall v_i \in \pi_j$\;
    $WS_j \gets$ Eq.~\eqref{eq:Ws} \;
    $para(\pi_j) \leftarrow$ Eq. \eqref{eq:para_set}\;
    \While{$WS_j> 0$ \textbf{and} $para(\pi_j) \neq \emptyset$}{
      $v_c \gets $ Top$(para(\pi_j))$ by $W^{anc}$\;
      \If{$\hat{C_c} > WS_j$}{
        Add $\{v_c:(v_{c,p}, v_{c,r})\} $ to  $\overline{V}$\;
        $\hat{C}_c \gets \hat{C}_{c,p}$\;
      }
    $m_c \gets \min\{m_c^{max},S_j\}$\;
    Add extra dependencies to $\overline{E}$\;
    $WS_j \leftarrow WS_j - \hat{C_c}$\;
    Remove $v_c$ from $para(\pi_j)$\;
    }
    Update Released\_nodes \;
}
 \end{algorithm}
 
\vspace{-0.2cm}

\textbf{Algorithm~\ref{alg:scheduling}} outlines the complete scheduling process of Section \ref{sec:schedule}, which takes $G$ and $\Pi$ as the input and outputs the scheduling scheme including parallelism setup $m_i$ for each node, node segmentation scheme $\overline{V}$, and the extra dependency set $\overline{E}$.
For each balanced group $\pi\in\Pi$, the scheduler first assigns parallelism values for each node in $\pi$ (Line 4). Then, the spare capacity $WS_j$ and parallel nodes $para(\pi_j)$ are initialized (Line 5-6).
The parallel nodes are launched following the descending cumulative ancestor workload order until spare capacity is filled or all the parallel nodes are launched (Line 7-17). 
Node segmentation is applied when the chosen node $v_c$ is too big to fit in $WS_j$ (Line 9-12). The chosen node (segment) is always set with the maximum parallelism (Line 13). After that, the extra dependencies are inserted to ensure the execution order (Line 14), and the status is updated for the next parallel node chosen (Line 15-16). The release nodes are updated when the scheduling of a group finishes (Line 18). 
The time complexity of Algorithm \ref{alg:scheduling} is $\mathcal{O}(|\pi|) \prec \mathcal{O}(|V|)$.

\begin{figure}[htpb!]
    \centering
    \includegraphics[width=0.9\linewidth]{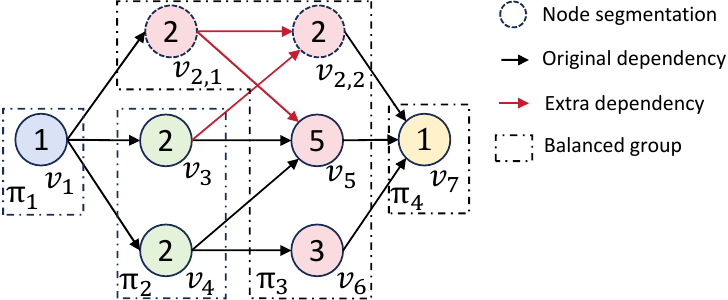}
    \caption{Example of proposed schedule}
    \label{fig:sche}
    \vspace{-0.2cm}
\end{figure}

\textbf{Example.} Figure~\ref{fig:sche} illustrates the proposed scheduling scheme for the example DAG in Figure~\ref{fig:dag_gpu} on a GPU with $M=6$ SMs.  
The balanced group $\pi_2=\{v_3, v_4\}$, configured with $m_3 = m_4 = 2$, does not fully occupy the GPU. Thus, the only node in $para(\pi_2)=\{v_2\}$ is selected to launch.
Since $\hat{C}_2 = 4$ exceeds the available capacity ($WS_2 = 2$), node $v_2$ is split into a parallel segment $v_{2,1}$ and a residual segment $v_{2,2}$, each with load $\hat{c}_{2,1}=\hat{c}_{2,2}=2$. The segment $v_{2,1}$ executes in parallel with $\pi_2$ under $m_{2,1}=2$, while the residual segment $v_{2,2}$ is scheduled with its own group $\pi_3$.
The extra dependencies $\{v_{2,1}, v_5\}$ and $\{v_3, v_{2,2}\}$ enforce the SM occupancy of each group and the sequential execution between $\pi_2$ and $\pi_3$.

\subsection{Response Time Analysis}\label{sec:ana}
The extra dependencies introduced by scheduling enforce a determined group execution order, enabling a predictable upper bound on the DAG response time.

\begin{lemma}\label{lemma:group_rt}
For any balanced group $\pi_j \in \Pi$, its relative response time $R(\pi_j)$, defined as the interval from the completion of $\pi_{j-1}$ to the completion of $\pi_j$, is upper-bounded as:
\begin{equation}\label{eq:R_pi_lemma}
    R(\pi_j) \le \max_{v_i \in \pi_j} C_i .
\end{equation}
\end{lemma}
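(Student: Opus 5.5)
The plan is to show three things. At the completion instant of $\pi_{j-1}$ every node of $\pi_j$ is released. During the window that follows, no other work can take SMs reserved for $\pi_j$. Each $v_i\in\pi_j$ therefore runs with its configured $m_i$ SMs and finishes within its isolated execution time $C_i$ from Eq.~\eqref{eq:kernel_execution}.

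\emph{Step 1: release.} The groups in $\Pi$ are built by Algorithm~\ref{alg:garpah_division} in topological order. Blocks follow the order of $\Theta$, and within a block a group is formed from the current heads of the local complete paths. Hence every predecessor of a node in $\pi_j$ lies in some $\pi_{j'}$ with $j'<j$, or is a residual segment or parallel node attached to an earlier group. The extra dependencies $(v_R,v_n)$ and $(v_n,succ(v_R))$ force all such work to finish no later than the sink $v_R$ of its group. It follows by induction on $j$ that all predecessors of $\pi_j$ complete by the completion of $\pi_{j-1}$.

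\emph{Step 2: interference.} I will consider the work that can execute on the GPU between the completion of $\pi_{j-1}$ and the completion of $\pi_j$. This consists of the nodes of $\pi_j$ and the parallel nodes or segments launched with $\pi_j$. Unlaunched members of $para(\pi_j)$ and residual segments are blocked by the edge $(v_R,v_n)$ and cannot start. Nodes of later groups are blocked by Step 1 applied to $\pi_{j+1}$. Work associated with earlier groups has already finished, because of the edges into $succ(v_{R})$ of those groups.

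\emph{Step 3: capacity.} The resource demand in the window is $\sum_{v_i\in\pi_j} m_i$ plus the parallel nodes' $m_c$. Each $m_c$ is drawn from the spare $S_j=M-\sum m_i$, so the total is at most $M$. The first term is at most $M$ by Eq.~\eqref{eq:m_i}, using Rule 1 ($|\pi_j|\le M$) and the proportional rounding. Since total demand never exceeds $M$, the block scheduler can dispatch every kernel on its requested SMs immediately, whatever order it picks. So each $v_i\in\pi_j$ finishes within $C_i$. The parallel segments finish within the window as well: their load is bounded by $WS_j=\max C_i\cdot S_j$, and they are forced to finish before $succ(v_R)$. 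The completion of $\pi_j$ is therefore at most $\max_{v_i\in\pi_j}C_i$ after the completion of $\pi_{j-1}$. For the Rule 2 case with a single large node, $m_i=\min(m_i^{max},M)$ and the bound is immediate.

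\emph{Main obstacle.} The delicate step is Step 3's claim that $\sum m_i\le M$ under the $\mathrm{round}(\cdot)$ in Eq.~\eqref{eq:m_i}. Rounding up can push the sum past $M$, and each node needs at least one SM. I would first rely on the paper's stated guarantee that the configuration never exceeds capacity. If needed, I would argue that it suffices to use floor rounding with a minimum of $1$, which is admissible by Rule 1, since $C_i$ is then computed from the actual $m_i$. A second subtlety is that the parallel segment chosen greedily with $m_{c}=\min(m_c^{max},S_j)$ must indeed fit within $\max C_i$. This follows from $\hat C_{c,p}\le WS_j$ together with Eq.~\eqref{eq:kernel_execution}.
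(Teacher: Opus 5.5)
\textbf{Main gap: Step~2 does not establish non-interference.} Your outline (release, non-interference, capacity) has the same skeleton as the paper's proof. That proof simply asserts that the extra dependencies make $\pi_j$ run as an isolated concurrent batch, and that scaling and segmentation remove contention. Where you try to derive non-interference, the argument fails.

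In Step~2, ``nodes of later groups are blocked by Step~1 applied to $\pi_{j+1}$'' does not follow. Step~1 says the predecessors of $\pi_{j+1}$ finish \emph{no later than} the completion of $\pi_j$. That is an upper bound on release; it gives no lower bound on when nodes of $\pi_{j+1},\pi_{j+2},\dots$ may start. The construction does not supply such a bound either. Every extra edge involves a member of $para(\pi_j)$, which contains only nodes already released when $\pi_j$ is released, and the blocking edges leave only from $v_R$. So a successor of a short node of $\pi_j$ is constrained only by its DAG edges.

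For a concrete instance, take $s\to x_1\to x_2\to x_3\to t$ and $s\to y_1\to y_2\to y_3\to t$ with $M=4$ and $(\hat C_{x_k},\hat C_{y_k})=(1,3),(1,3),(3,1)$. Algorithm~1 yields groups $\{x_k,y_k\}$ with $m=(1,3)$ or $(3,1)$. Each group fills the GPU, so no extra edge is ever created. Suppose $x_1,x_2$ finish early, which both the model and the paper's sustainability paragraph allow. Then $x_3$ is dispatched while $y_1$ is still running, and it still holds SMs when the window of $\{x_2,y_2\}$ opens. So $y_2$ is not guaranteed its $3$ SMs, and $R(\{x_2,y_2\})\le C_{y_2}=1$ does not follow.

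The same objection applies to your claim that earlier-group work has already finished. A node launched in parallel with $\pi_{j-1}$ can outlive $\pi_{j-1}$'s own nodes when those finish early. The edges into $succ(v_R)$ then hold back only the successors of $v_R$, not the rest of $\pi_j$.

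To close the gap you have two options:
\begin{itemize}
\item take strict group-by-group sequencing as an assumed property of the schedule, as the paper's proof and its Theorem implicitly do; or
\item add a genuine barrier, e.g.\ edges from every node or segment executed in window $j$ to every one executed in window $j+1$. By transitivity this holds back all later work, and your Steps~1--3 then go through.
\end{itemize}

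\textbf{Two further problems in Step~3.} First, ``each $m_c$ is drawn from the spare $S_j$, so the total is at most $M$'' holds only if a single parallel node is launched. Algorithm~2 assigns $m_c=\min(m_c^{max},S_j)$ to every chosen node and decrements $WS_j$ but never $S_j$. So $k$ launched nodes may request up to $kS_j$ SMs. You need two explicit invariants:
\begin{itemize}
\item $\sum_c m_c\le S_j$, so total demand stays within $M$;
\item $\hat C_c\le m_c\max_{v_i\in\pi_j}C_i$ for each launched node, which is what gives $C_c\le\max_{v_i\in\pi_j}C_i$.
\end{itemize}

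Second, your fallback for the rounding issue does not work, because floor-with-minimum-one can also exceed $M$. Take $M=4$ and loads $(1,1,1,3)$, which satisfy Rules~1 and~2. The proportional shares are $(2/3,2/3,2/3,2)$. Rounding to nearest and flooring with a minimum of one both give $(1,1,1,2)$, whose sum is $5$. A correct repair first gives every node one SM, which Rule~1 makes possible. It then apportions the remaining $M-|\pi_j|$ SMs by the floors of the proportional shares, capping each $m_i$ at $m_i^{max}$.
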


\begin{proof}
The extra dependencies ensure that the nodes in $\pi_j$ execute concurrently.  
The parallelism scaling and node segmentation mechanisms avoid unpredictable delay.
Thus, $R(\pi_j)$ is bounded by its maximum node execution time.
\end{proof}

\begin{theorem}\label{theorem:dag_makespan}
The makespan of the DAG is upper-bounded as:
\begin{equation}\label{eq:R_G_theorem}
    R(G) \le \sum_{\pi_j \in \Pi} R(\pi_j).
\end{equation}
\end{theorem}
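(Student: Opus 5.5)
The plan is to argue by induction over the ordered balanced group list $\Pi=\langle \pi_1,\ldots,\pi_{|\Pi|}\rangle$. Let $f_j$ denote the completion time of $\pi_j$, with $f_0=0$ the release of the source node. The goal is to show $f_j \le \sum_{l\le j} R(\pi_l)$ for every $j$, and then to argue that the DAG finishes at $f_{|\Pi|}$. The per-group bound of Lemma~\ref{lemma:group_rt} is not needed for this statement. It only turns the right-hand side into $\sum_j \max_{v_i\in\pi_j} C_i$ afterwards.

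\textbf{Step 1: every node belongs to exactly one group or is a segment attached to one.} The blocks $B_1,\ldots,B_{N+1}$ partition $V$. This holds because Eq.~\eqref{eq:b_k} removes previously assigned nodes and the residual block collects the rest. Algorithm~\ref{alg:garpah_division} then empties each $\Lambda_k$, so every node of every block lands in exactly one $\pi_j$. Nodes launched opportunistically as members of some $para(\pi_j)$, and parallel segments $v_{c,p}$, are constrained by the extra edges $(v_n, succ(v_R))$. These edges force them to complete no later than $f_j$, so they contribute nothing beyond $f_j$. Residual segments $v_{c,r}$, and parallel nodes that are not launched, receive the edge $(v_R,v_n)$. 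They are therefore executed within a later group and are accounted for there.

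\textbf{Step 2: the induction step.} Here I must show that no node of $\pi_j$ can be prevented from starting at $f_{j-1}$ by an unfinished dependency. Consider a predecessor $u$ of $v\in\pi_j$. Either $u$ lies in an earlier block, or $u$ is the head-predecessor on the same local complete path, which was extracted into an earlier group. Either way $u$ belongs to some $\pi_l$ with $l<j$. This uses the fact that $\Theta$, and hence $B$ and $\Pi$, respect topological order. Every node of $\pi_l$ has finished by $f_l\le f_{j-1}$. The extra dependencies serialize consecutive groups, so $\pi_j$ occupies the interval $[f_{j-1},f_j]$ and $f_j-f_{j-1}=R(\pi_j)$ by definition. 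Summing these telescoping differences gives the bound. Finally, the sink lies in the last group, since it is in the residual block or is the last join node. Hence $R(G)=f_{|\Pi|}$, or at most that.

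\textbf{Main obstacle.} I expect the hard step to be ensuring that intervals do not overlap in a way that invalidates the definition of $R(\pi_j)$ as the time from $f_{j-1}$ to $f_j$. In particular, $\pi_j$ could in principle start before $f_{j-1}$ if it has no extra edge from $\pi_{j-1}$, which would only help. The real danger is a node of $\pi_j$ waiting on something outside the group chain. I would address this with the topological-order argument in Step 2, together with the observation that residual segments inherit $v_c$'s forward dependencies. Their successors therefore wait on them, and they themselves sit in a group no later than those successors. If $\pi_j$ starts early, $f_j\le f_{j-1}+R(\pi_j)$ still holds with $R$ measured from $f_{j-1}$. This is the inequality the sum requires.
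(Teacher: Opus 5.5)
Your proposal is correct and follows essentially the paper's route. The paper's proof asserts that the extra dependencies force the groups of $\Pi$ to run strictly one after another, so the makespan is at most the sum of the relative response times; you make this explicit by telescoping $R(\pi_j)=f_j-f_{j-1}$ and checking that the sink lies in the last group. One remark: with $R(\pi_j)$ defined as the interval between consecutive group completions, the identity $\sum_j R(\pi_j)=f_{|\Pi|}$ plus the sink being in the last group already gives the bound, so your Step~2 predecessor argument, and the serialization it relies on, is really what Lemma~\ref{lemma:group_rt} needs rather than this theorem.
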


\begin{proof}
Since the proposed extra dependencies enforce a strictly sequential execution order over the groups in $\Pi$, the total makespan is bounded by the sum of all the relative response times. 
\end{proof}

\textbf{Sustainability:}
If the actual execution time of any node $v_i$ is less than its analytical bound $C_i$ in Eq. \eqref{eq:kernel_execution}, the corresponding group’s response time will not exceed the worst-case bound in Eq.~\eqref{eq:R_pi_lemma}. Such early completion also does not violate the sequential execution order enforced by the extra dependencies.  
Therefore, the timing analysis in Eq.~\eqref{eq:R_G_theorem} remains sound under early completion.

\textbf{Example.} Figure~\ref{fig:ana} illustrates the timing analysis, where the size of each kernel on SM represents its computation load.  
Figure~\ref{fig:ana}(a) shows the analysis under the Greedy schedule, where each $v_i$ is assigned $m_i^{\max}$. The concurrent node sets $\{v_2, v_3, v_4\}$ and $\{v_5, v_6\}$ require more SMs than the GPU can provide, leading to resource contention and unpredictable kernel execution delays.
As a comparison, Figure~\ref{fig:ana}(b) shows the analysis under the proposed scheduling. The proposed parallelism scaling and node segmentation ensure that the SM demands of $\pi_2$ and $\pi_3$ never exceed the GPU capacity, eliminating contention-induced delays. The extra dependency enforces sequential execution between $\pi_2$ and $\pi_3$, yielding a predictable makespan bounded by $R(G) \le 5$. Notably, this bound does not make any assumptions about the hardware scheduling.

\begin{figure}[t]
    \centering
    \includegraphics[width=1.0\linewidth]{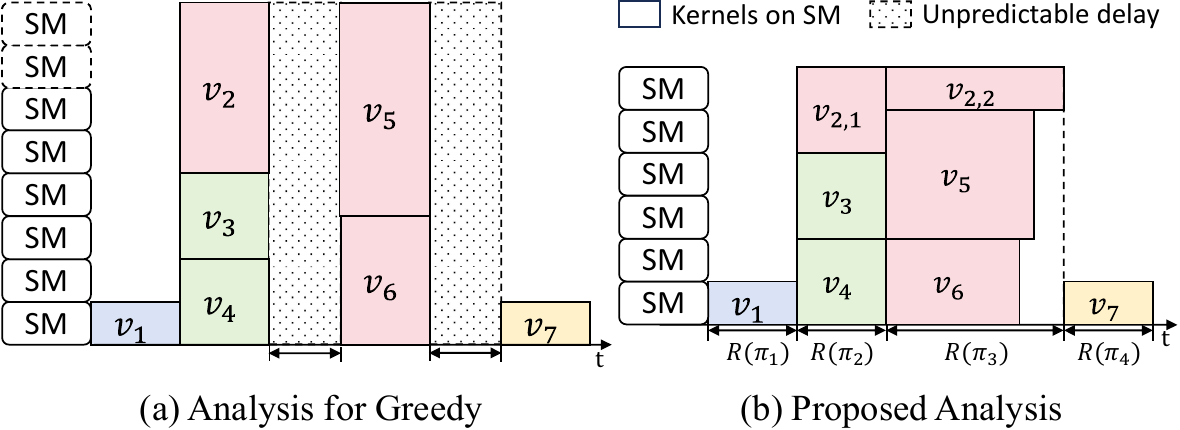}
    \caption{Example of timing analysis}
    \label{fig:ana}
    \vspace{-0.4cm}
\end{figure}

\section{Evaluation}\label{sec:evaluate}

\subsection{Experimental Setup}

To evaluate the worst-case makespan, a large set of synthesized DAGs was generated using the tool in~\cite{dag_gen}. Unless otherwise stated, the maximum DAG depth (\ie, number of layers) is randomly selected between 5 and 8. DAG construction starts from a single source node and proceeds layer by layer, with the number of nodes in each layer uniformly sampled between 2 and the maximum parallelism parameter $P$. For each setup, $1,000$ DAGs are randomly generated. Within each DAG, the node’s computation load is assigned uniformly, satisfying a given average load  $\hat{C}_{\text{avg}}$.

To evaluate the GPU task execution time, a case study on real-world benchmarks is performed. The proposed scheduling scheme and the competitive method are implemented using the CUDA Graph API, reducing the overhead of repeated kernel launches~\cite {NVIDIA_GettingStartedCudaGraphs_2020}. The required parallelism $m_i$ for each node is realized by configuring the kernel launch parameters $D_g$ and $D_b$ on each target platform. To emulate different computation loads across nodes, a single kernel function is used with various iteration counts.

\vspace{-0.2cm}
\subsection{Large-scale Experiment}

In this experiment, we compare the derived worst-case makespan bound in Eq.~\eqref{eq:R_G_theorem}
(\textbf{Proposed}) against $3$ representative methods: 

\textbf{Greedy.} Each kernel is assigned the maximum parallelism as $m_i = \min\{m_i^{max}, M\}$ for each $v_i$. As the kernel execution order and delay caused by resource competition are unpredictable, a sequential node execution is assumed in the analysis. 

\textbf{Greedy + unaware (Reference).} The parallelism assignment ignore the platform structure as $m_i = m_i^{max}$ for each $v_i$, with a sequential node execution order. The resulting makespan is used as a reference for normalization in analysis.

\textbf{Graham\_para.}
A parallel adaptation of Graham’s bound \cite{graham}, which analyzes the DAG's makespan without assuming node-level priorities. Each node is replaced with a set of parallel unit nodes with basic computation load ($\hat{C}=1$), ensuring each unit occupies exactly one SM, and the Graham bound is then applied to the transformed DAG.


\begin{figure}[ht]
    \centering
    \includegraphics[scale=0.42]{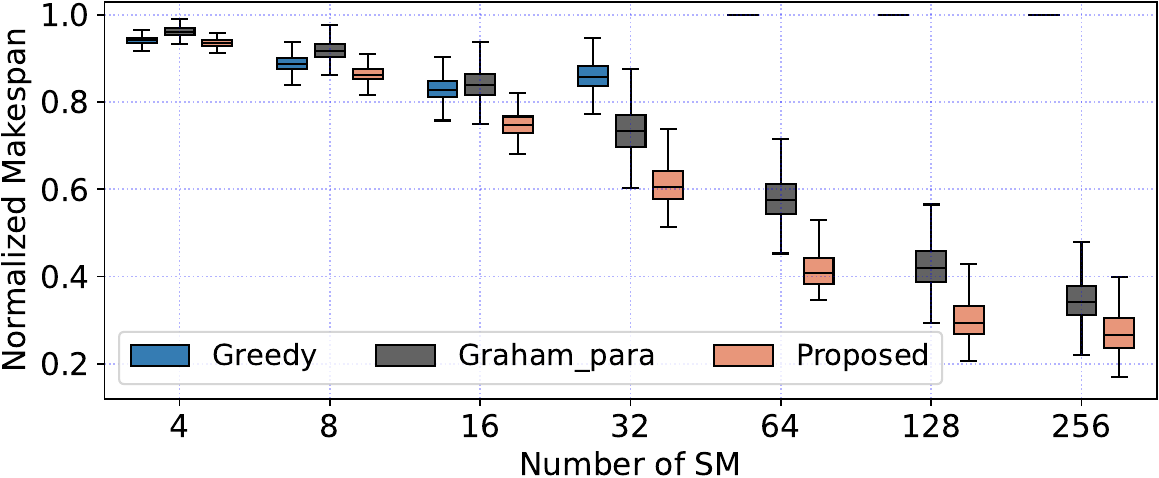}
    \caption{Makespan with various $M$ when $\hat{C}_{\text{avg}} = 20$}
    \label{fig:makespan_m}
    \vspace{-0.2cm}
\end{figure}

Figure~\ref{fig:makespan_m} reports the normalized makespan under various numbers of SMs, with $M$ ranging logarithmically from $4$ to $256$. The performance gap between the proposed method and the others generally widens as $M$ increases. When $M=128$, the proposed method outperforms Graham\_para by $28.7\%$ on average due to finer-grained exploitation of kernel concurrency; this gap narrows to $21.2\%$ when $M=256$, where both methods benefit from abundant SM resources and reach near-maximal parallelism. For $M \le 8$, the performance gap between Greedy and the proposed method remains below $3\%$, as kernels frequently saturate the GPU and execute mostly sequentially. As $M$ grows beyond $32$, Greedy fails to capitalize on the additional SMs, leading to noticeable degradation. Greedy and Greedy\_unaware converge when $M \ge 64$, where all kernels can achieve their maximum parallelism.

\begin{figure}[h]
    \centering
    \includegraphics[scale=0.42]{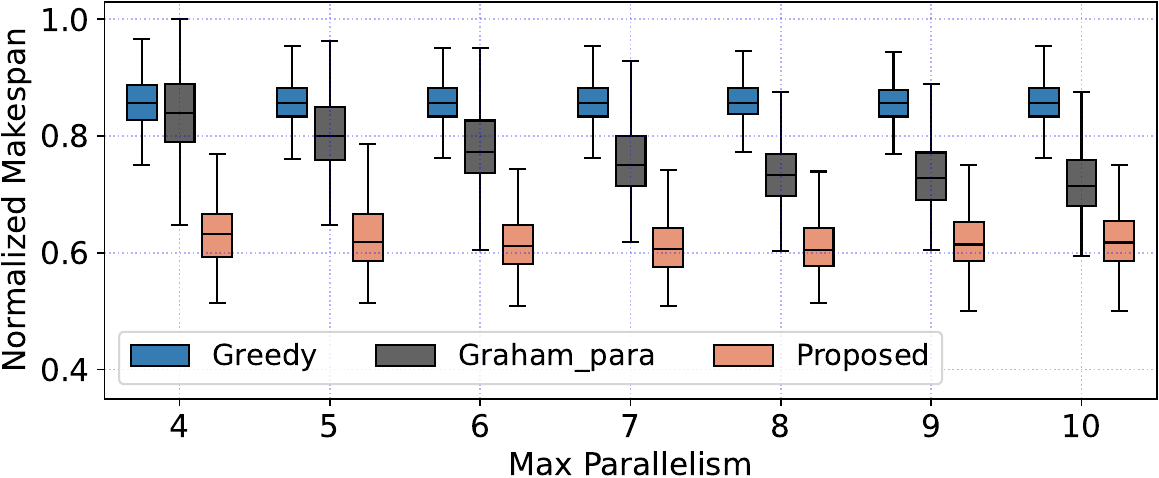}
    \caption{Makespan with various $P$ when $M=32,\hat{C}_{\text{avg}} = 20$}
    \label{fig:makespan_p}
    \vspace{-0.2cm}
\end{figure}

Figure~\ref{fig:makespan_p} reports the normalized makespan under different $P$. The proposed method consistently achieves the lowest makespan across all configurations, outperforming Graham\_para and Greedy by $18.9\%$ and $27.2\%$ on average, demonstrating its robustness on various DAG structures.
The performance gap between the proposed method and Greedy decreases from $23.9\%$ when $P=4$ to $13.7\%$ when $P=10$. This is because with a fixed $M$, increasing $P$ makes the DAG more likely to fully occupy the GPU and thereby reduces the advantage of the proposed parallelism scaling.

\begin{figure}[h]
    \centering
    \includegraphics[scale=0.42]{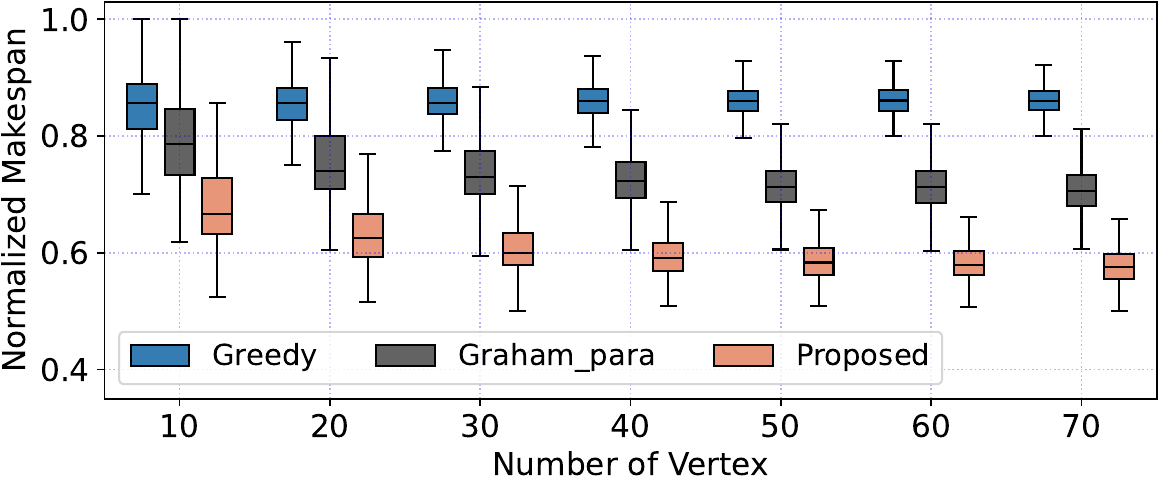}
    \caption{Makespan with various $|V|$ when $M=32, \hat{C}_{\text{avg}} = 20$}
    \label{fig:makespan_v}
    \vspace{-0.2cm}
\end{figure}

Figure~\ref{fig:makespan_v} illustrates the normalized makespan under various $|V|$. In this setup, $|V|$ is scaled by increasing the DAG depth while fixing $P=32$.
Across all configurations, the proposed method consistently yields the lowest makespan, showing robustness with the DAG size expansion.
The advantage of the proposed method increases as $|V|$ increases, up to $32.8\%$ and $18.2\%$ than Greedy and Graham\_para, respectively.
This is because a deeper DAG depth results in more balanced groups, amplifying the benefits of the proposed method.

\vspace{-0.1cm}

\subsection{Case Study}

A case study was conducted on $3$ DAG-structured benchmark tasks: Laplace, Gaussian elimination, and Stencil~\cite{heft,benoit2009contention}, on two GPU platforms: an NVIDIA RTX 3060 ($M=30$) and a Jetson Orin Nano ($M=8$). Each benchmark was executed $100$ times, with the measured task execution time summarized in Table~\ref{tab:case_study_c4} and~\ref{tab:case_study_c20}. The Greedy approach assigns maximum parallelism to each node without enforcing sequential constraints among nodes.

In most setups, the proposed method achieves more stable task execution times (lower standard deviation) due to its enforced execution order. It also achieves a lower execution time in most cases, with improvements up to an average $21.3\%$ when $M=8,\hat{C}_{\text{avg}}=4$ and $15.0\%$ when $M=30, \hat{C}_{\text{avg}}=20$ for Gaussian elimination. In these system setups, most kernels can neither saturate the GPU nor get its maximum parallelism, amplifying the benefit of the proposed parallelism scaling mechanism. When the average load is too small or too large for $M$, kernels either mostly reach maximum parallelism or fully occupy the GPU, reducing the proposed benefit, which is consistent with trends against Graham\_para in Figure~\ref{fig:makespan_m}.

\begin{table}[]
\centering
\caption{Task execution time of benchmarks when $\hat{C}_{avg}=4$}
\label{tab:case_study_c4}

\resizebox{\columnwidth}{!}{%
\begin{tabular}{|ll|lll|lll|}
\hline
\multicolumn{2}{|l|}{Method} & \multicolumn{3}{l|}{Proposed} & \multicolumn{3}{l|}{Greedy} \\ \hline
\multicolumn{2}{|l|}{Execution time(ms)} & \multicolumn{1}{l|}{max.} & \multicolumn{1}{l|}{\textbf{avg.}} & std. & \multicolumn{1}{l|}{max.} & \multicolumn{1}{l|}{avg.} & std. \\ \hline
\multicolumn{1}{|l|}{\multirow{3}{*}{M=8}} & Gaussian & \multicolumn{1}{l|}{74.25} & \multicolumn{1}{l|}{\textbf{51.23}} & 7.47 & \multicolumn{1}{l|}{85.61} & \multicolumn{1}{l|}{65.07} & 9.12 \\ \cline{2-8} 
\multicolumn{1}{|l|}{} & Laplace & \multicolumn{1}{l|}{100.83} & \multicolumn{1}{l|}{\textbf{83.30}} & 7.20 & \multicolumn{1}{l|}{121.72} & \multicolumn{1}{l|}{104.39} & 9.02 \\ \cline{2-8} 
\multicolumn{1}{|l|}{} & Stencil & \multicolumn{1}{l|}{114.12} & \multicolumn{1}{l|}{\textbf{86.07}} & 9.13 & \multicolumn{1}{l|}{122.76} & \multicolumn{1}{l|}{106.12} & 8.17 \\ \hline
\multicolumn{1}{|l|}{\multirow{3}{*}{M=30}} & Gaussian & \multicolumn{1}{l|}{10.22} & \multicolumn{1}{l|}{\textbf{7.32}} & 0.33 & \multicolumn{1}{l|}{12.25} & \multicolumn{1}{l|}{7.38} & 0.60 \\ \cline{2-8} 
\multicolumn{1}{|l|}{} & Laplace & \multicolumn{1}{l|}{11.32} & \multicolumn{1}{l|}{\textbf{11.05}} & 0.14 & \multicolumn{1}{l|}{11.35} & \multicolumn{1}{l|}{11.08} & 0.15 \\ \cline{2-8} 
\multicolumn{1}{|l|}{} & Stencil & \multicolumn{1}{l|}{11.98} & \multicolumn{1}{l|}{\textbf{11.37}} & 0.15 & \multicolumn{1}{l|}{11.51} & \multicolumn{1}{l|}{11.09} & 0.17 \\ \hline
\end{tabular}%
}
\end{table}

\begin{table}[]
\centering
\caption{Task execution time of benchmarks when $\hat{C}_{avg}=20$}
\label{tab:case_study_c20}

\resizebox{\columnwidth}{!}{%
\begin{tabular}{|ll|lll|lll|}
\hline
\multicolumn{2}{|l|}{Method} & \multicolumn{3}{l|}{Proposed} & \multicolumn{3}{l|}{Greedy} \\ \hline
\multicolumn{2}{|l|}{Execution time(ms)} & \multicolumn{1}{l|}{max.} & \multicolumn{1}{l|}{\textbf{avg.}} & std. & \multicolumn{1}{l|}{max.} & \multicolumn{1}{l|}{avg.} & std. \\ \hline
\multicolumn{1}{|l|}{\multirow{3}{*}{M=8}} & Gaussian & \multicolumn{1}{l|}{193.11} & \multicolumn{1}{l|}{\textbf{165.24}} & 11.36 & \multicolumn{1}{l|}{194.50} & \multicolumn{1}{l|}{167.90} & 12.34 \\ \cline{2-8} 
\multicolumn{1}{|l|}{} & Laplace & \multicolumn{1}{l|}{309.29} & \multicolumn{1}{l|}{\textbf{280.31}} & 12.17 & \multicolumn{1}{l|}{331.42} & \multicolumn{1}{l|}{306.65} & 10.36 \\ \cline{2-8} 
\multicolumn{1}{|l|}{} & Stencil & \multicolumn{1}{l|}{333.62} & \multicolumn{1}{l|}{\textbf{297.03}} & 12.12 & \multicolumn{1}{l|}{353.54} & \multicolumn{1}{l|}{324.89} & 12.87 \\ \hline
\multicolumn{1}{|l|}{\multirow{3}{*}{M=30}} & Gaussian & \multicolumn{1}{l|}{29.05} & \multicolumn{1}{l|}{\textbf{23.96}} & 0.99 & \multicolumn{1}{l|}{31.71} & \multicolumn{1}{l|}{28.20} & 1.05 \\ \cline{2-8} 
\multicolumn{1}{|l|}{} & Laplace & \multicolumn{1}{l|}{46.81} & \multicolumn{1}{l|}{\textbf{40.74}} & 1.43 & \multicolumn{1}{l|}{51.30} & \multicolumn{1}{l|}{44.82} & 1.43 \\ \cline{2-8} 
\multicolumn{1}{|l|}{} & Stencil & \multicolumn{1}{l|}{49.44} & \multicolumn{1}{l|}{\textbf{43.39}} & 3.11 & \multicolumn{1}{l|}{54.00} & \multicolumn{1}{l|}{48.45} & 3.39 \\ \hline
\end{tabular}%
}
\end{table}
\vspace{-0.2cm}

\section{Conclusion}
This paper presented a scheduling and timing analysis framework for DAG-structured GPU tasks. By decomposing the DAG into balanced groups, the method enables fine-grained scheduling and analysis, providing a reduced and predictable makespan. The scheduling is fully realizable within the scope of the standard CUDA API, without additional hardware or software support. The corresponding timing analysis does not make any assumption about kernel-level priority. Extensive experiments on large-scale synthetic DAGs and a case study prove the effectiveness of the proposed approach. Future work will extend the framework to conditional DAGs on heterogeneous platforms.
\bibliographystyle{ACM-Reference-Format}
\bibliography{dag_gpu}


\end{document}